\newtheorem{theorem}{Theorem}
\newtheorem{example}{Example}
\newtheorem{lemma}{Lemma}
\newtheorem{remark}{Remark}
\newcommand{\bx}{\mathbf{x}}
\newcommand{\cC}{\mathcal{C}}
\newcommand{\cH}{\mathcal{H}}
\newcommand{\cP}{\mathcal{P}}
\newcommand{\cQ}{\mathcal{Q}}
\newcommand{\cB}{\mathcal{B}}
\newcommand{\cR}{\mathcal{R}}
\newcommand{\cX}{\mathcal{X}}
\newcommand{\cZ}{\mathcal{Z}}
\newcommand{\bX}{\mathbf{X}}
\newcommand{\continue}{\mathit{continue}}
\DeclareMathOperator{\erf}{erf}
\begin{document}
%
\title{Classification in a Large Network}
\author{\IEEEauthorblockN{Vinay A. Vaishampayan}
\IEEEauthorblockA{City University of New York-College of Staten Island\\Staten Island, NY 10314, USA}
}


%

\maketitle

\begin{abstract}
We construct  and analyze the communication cost of protocols (interactive and one-way) for classifying $\bX=(X_1,X_2,\ldots,X_n) \in [0,1)^n \subset \mathbb{R}^n$,  in a  network with $n\geq 2$ nodes, with $X_i$ known only at node $i$. The classifier takes the form $\sum_{i=1}^nh_iX_i \gtrless a$, with weights $h_i \in \{-1,+1\}$.  The interactive protocol (a zero-error protocol) exchanges a variable number of messages depending on the input $\bX$ and its sum rate is directly proportional to its mean stopping time. An exact analysis, as well as an approximation of the mean stopping time is presented and  shows that it  depends on $\gamma=\alpha+(1/2-\beta)$, where $\alpha=a/n$ and $\beta=m/n$, with $m$ being the number of positive weights. In particular, the mean stopping time grows logarithmically in $n$ when $\gamma=0$, and is bounded in $n$ otherwise.
Comparisons show that the sum rate of the  interactive protocol is smaller than that of the one-way protocol when the error probability for the one-way protocol is small, with the reverse being true when the error probability is large. Comparisons of the interactive protocol are also made with lower bounds on the sum rate.
\end{abstract}

{\small \textbf{\textit{Index terms}---Interactive communication,  distributed function computation,  hypothesis testing, sequential hypothesis testing, classification.}}


%
\IEEEpeerreviewmaketitle
\section{Introduction}

We study the communication cost of implementing a 2-classifier $f~:~\mathbb{R}^n\rightarrow \{0,1\}$ which maps input vector $\bX=(X_1,X_2,\ldots,X_n)$ to a class label in the set $\{0,1\}$. It is assumed that the classifier is to be implemented in a network with $n$ sensor nodes, where $X_i$ is a random variable known at the $i$th node alone.  We propose a distributed algorithm for solving the classification problem and  provide an analysis of the communication rate required by this algorithm. We also investigate lower bounds on the communication rate. 

Previous related work is summarized in Sec.~\ref{sec:previous}, problem setup is  in Sec.~\ref{sec:setup}, communication protocols and their analysis are  in Secs.~\ref{sec:bitexchange}--\ref{sec:SingleRound}, lower bounds on the communication rate are in Sec.~\ref{sec:converse}, numerical performance results are  in Sec.~\ref{sec:numeric} and Sec.~\ref{sec:summary} contains a summary of the paper.

\section{Background}
\label{sec:previous}
Given a function  $f~:~\cX_1\times\cX_2\ldots\times \cX_n \rightarrow \cZ$, where $\cX_i$ and $\cZ$ are arbitrary sets, its   \emph{communication complexity}~\cite{Yao:1979},  is $\cC(f):=\inf_{\cP} R(\cP)$,  where $R(\cP)$ is the communication cost of protocol $\cP$ for computing $f$ in a distributed environment.  Here, the \emph{average-case} rather than worst-case $R(\cP)$ is considered, the average being over all inputs with respect to a known probability distribution~\cite{OrEl:1990}.

Communication complexity for interactive function computation of Boolean functions is explored in great detail in the seminal work~\cite{OrEl:1990}, which has also influenced the development here. Our paper addresses a problem of current interest since classification is an important part of machine learning. Interactive and one-way hypothesis testing were studied information theoretically (i.e. as the block length grows) in \cite{XiangKim:2012} for $n=2$ nodes and $r\leq 2$ rounds. Here, we assume a block length of unity, the methods used are from sequential analysis, and the results derived are for infinite round protocols for $n\geq 2$ nodes. We note that classification is also a central operation in nearest lattice point search~\cite{BVC:2017},~\cite{VB:2017}. The bit exchange protocol described here is in a class of protocols  described in~\cite{Orlitsky:1992}. 


\section{Problem Setup}
\label{sec:setup}
Since $X_i$ is known at node $i$ and $h_i$ is known globally, the quantity $|h_i|X_i$ can be treated as a new $X_i$, with a modified probability distribution, known to all nodes. If $h_i=0$ for some $i$, it can be dropped and treated as a problem with a smaller $n$. Thus it suffices to consider classifiers with filter coefficients $h_i\in \{-1,+1\}$, with magnitudes of the $h_i$'s absorbed into the probability distribution $p_\bX$.  Let $\cH^+:=\{i:h_i=+1\}$ and $\cH^-:=\{i~:~h_i =-1\}$. Let $m:=|\cH^+|$, the number of  positive filter weights. Then $|\cH^-|=n-m$. Let
\begin{equation}
X_s:=\sum_{i\in \cH^+}X_i-\sum_{i\in \cH^-}X_i.
\label{eqn:Xs}
\end{equation}
Our classification function 
\begin{equation}
f(\bX):=\left\{ \begin{array}{cc}
1, & X_s \geq  a,  \\
0, &   X_s <  a,
\end{array} \right.
\label{eqn:problem}
\end{equation}
partitions $\mathbb{R}^n$ into two classes $f^{-1}(0)=\{\bX~:~X_s <a \}$  and its complement $f^{-1}(1)$.

We assume here that $X_i$ are independent, identically distributed (iid) random variables, uniformly distributed on the unit interval $[0,1)$.   Note that under our assumptions for the source distribution, the classification problem is  non-trivial only if $a$ lies in the two-side open interval $(-(n-m),m)$.

\section{The Infinite Round Bit-Exchange Protocol}
\label{sec:bitexchange}
Nodes exchange information in a pre-arranged manner and any information transmitted by node-$i$ at time $t$ can depend only on $X_i$ and  information that it has received from other nodes at times $t' < t$. We do not assume a broadcast model for counting bits, so every transmitted bit  is indexed by $i,j$ and $t$, the source, destination and transmission time, respectively. The  cost  is the sum of communication over $i,j,t$. 
 
The   Interactive Protocol  computes $f(\bX)$  as follows. A node is selected as a leader node. Here, we assume that the leader node is distinct from the $n$ sensor nodes~\footnote{As will be apparent later, we could have chosen one of the sensor nodes as a leader node at a slightly lower communication cost. The corrections required for this are simple and negligible  $n\to \infty$ relative to the communication cost. } and at the outset of a session has no access to $X_i,~i=1,2,\ldots,n$. The order of communication is known to all nodes.  The time axis is broken into sessions, a session starts with a fresh observation $\bX$ and concludes when $f(\bX)$ is computed. Here we are concerned with  a single session.  A session is divided up into  rounds.

Node-$i$ produces a bit stream $b(i,j),~j=1,2,\ldots$  according to the following standard binary expansion rule: set $X(i,1)=X_i$ and for $j=1,2,\ldots...$ compute
\begin{eqnarray}
b(i,j) & = & \lfloor 2 X(i,j)\rfloor, \nonumber \\
X(i,j+1) & =& 2X(i,j)-b(i,j).
\label{eqn:bitdynam}
\end{eqnarray}
Let
\begin{equation}
B(l):=\sum_{i\in \cH^+} b(i,l)-\sum_{i\in \cH^-}b(i,l).
\end{equation}
and
$$Z(j):=\sum_{l=1}^j B(l)2^{-l},$$ with $Z(0)=0$.
In round $j\geq 1$, node-$i$ sends $b(i,j)$ to the leader node and after the leader node receives all the bits $b(i,j)$, $i=1,2,\ldots,n$, it computes $Z(j)$
and sends $s(j)$ back to each sensor node, where
\begin{equation}
s(j)=\left\{\begin{array}{cc}
0, & Z(j) \leq  L(j)  \\
1, & Z(j) \geq  U(j) \\
\continue, & \mbox{otherwise},
\end{array} \right.
\label{eqn:continue}
\end{equation}
with $L(j):=a-m2^{-j}$ and $U(j):=a+(n-m)2^{-j}$. Observe that $U(j+1)=U(j)-(n-m)2^{-(j+1)}$ and $L(j+1)=L(j)+m2^{-(j+1)}$, $j\geq 1$.

Let $T$ be the stopping time, i.e. $T$ is the smallest $k$ for which $s(k)\neq \continue$.
Since the leader node sends back  $n\log_2(3)$ bits at the conclusion of every round, 
the  total number of bits communicated  in a session is 
$
R_{sum} =  n(1+log_2(3))\bar{T}
$
where $\bar{T}:=E[T]$ is the mean stopping time, over all inputs $\bX$.  
\begin{lemma}
If at the end of round $j$, $s(j)\neq \continue$, then either $f(\bX)=0$ or $f(\bX)=1$.
\end{lemma}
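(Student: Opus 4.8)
The plan is to bound the true sum $X_s$ from both sides in terms of the partial sum $Z(j)$ that the leader node has after round $j$, and then to observe that the thresholds $L(j)$ and $U(j)$ are chosen precisely so that each value of $s(j)$ other than $\continue$ forces the comparison $X_s \gtrless a$ one way. First I would write $X_s = Z(j) + R(j)$, where $R(j) := \sum_{l>j} B(l) 2^{-l}$ is the contribution of the bits not yet exchanged by the end of round $j$. From the expansion rule \eqref{eqn:bitdynam} one has $2^{-j}X(i,j+1) = \sum_{l>j} b(i,l)2^{-l}$, so the tail contributed by node $i$ is $2^{-j}X(i,j+1)$ with $X(i,j+1)\in[0,1)$, i.e. each node's residual lies in $[0,2^{-j})$.

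The key step is to bound $R(j)$ by grouping the residuals according to sign. The $m$ nodes in $\cH^+$ contribute a total lying in $[0, m2^{-j})$, while the $n-m$ nodes in $\cH^-$ contribute a total in $[0,(n-m)2^{-j})$; subtracting, $-(n-m)2^{-j} < R(j) < m2^{-j}$, and hence
\[
Z(j)-(n-m)2^{-j} < X_s < Z(j)+m2^{-j}.
\]

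It then remains to do the case analysis using \eqref{eqn:continue}. If $s(j)=0$ then $Z(j)\le L(j)=a-m2^{-j}$, so the upper bound gives $X_s < Z(j)+m2^{-j}\le a$, whence $f(\bX)=0$. If $s(j)=1$ then $Z(j)\ge U(j)=a+(n-m)2^{-j}$, so the lower bound gives $X_s > Z(j)-(n-m)2^{-j}\ge a$, whence $f(\bX)=1$. In both cases the value of $f(\bX)$ is pinned down, which is exactly the claim.

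I expect the only real subtlety to be the bookkeeping of strict versus non-strict inequalities: the residuals are bounded by half-open intervals, so the tail bounds on $R(j)$ are strict, and this strictness is what makes the arguments close against the definition of $f$ in \eqref{eqn:problem}, where the decision boundary $X_s=a$ is assigned to class $1$. Concretely, the $s(j)=0$ branch yields the strict $X_s<a$ needed for $f(\bX)=0$, while the $s(j)=1$ branch only needs $X_s\ge a$, which the (even strict) lower bound supplies. Verifying that $L(j)$ and $U(j)$ were defined with exactly the offsets $m2^{-j}$ and $(n-m)2^{-j}$ that make these two inequalities tight is the one place where an off-by-a-factor slip would break the argument.
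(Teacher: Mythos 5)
Your proposal is correct and follows exactly the paper's route: the paper's one-line proof cites precisely the sandwich bound $Z(j)-(n-m)2^{-j} < X_s < Z(j)+m2^{-j}$ together with the definition of $f$ in \eqref{eqn:problem}, and your derivation of that bound via the per-node residuals $2^{-j}X(i,j+1)\in[0,2^{-j})$ plus the two-case threshold analysis simply fills in the details the paper leaves implicit. Your attention to the strict-versus-non-strict inequalities at the boundary $X_s=a$ is a correct and worthwhile refinement of the paper's terse argument.
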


\begin{proof}
This follows from the fact that $Z(j)-(n-m)2^{-j} < X_s < Z(j)+m2^{-j}$ and the definition of $f$ in \eqref{eqn:problem}.
\end{proof}

\subsection{A Recursive Description of the Protocol}
For the purpose of analysis, it is convenient to have an alternate description, referred to as the \emph{recursive} description,  of the protocol.  Towards this end consider the following alternate description of the feedback signal from leader to sensor.
Let 
\begin{equation}
\tilde{s}(j)=\left\{\begin{array}{cc}
0, & B(j) \leq \tilde{L}(j), \\
1, & B(j) \geq \tilde{U}(j), \\
\continue, & \mbox{otherwise}.
\end{array} \right.
\label{eqn:continue1}
\end{equation}
where $\tilde{L}(j)=2a(j)-m$, $\tilde{U}(j)=2a(j)+(n-m)$, 
\begin{equation}
a(j+1)=2a(j)-B(j),~~j\geq 1 \mbox{ and } a(1)=a.
\label{eqn:threshold}
\end{equation}
Observe that $\tilde{L}(j+1)=2\tilde{L}(j)+m-2B(j)$ and 
$\tilde{U}(j+1)=2\tilde{U}(j)-2B(j)-(n-m)$, $j\geq 1$.
\begin{remark}
From the form of $\tilde{L}(j)$ and $\tilde{U}(j)$, it is clear that at iteration $j$, the effective threshold is $a(j)$. Thus for the recursive protocol, the threshold shifts with each iteration, but the width of the interval $\tilde{U}(j)-\tilde{L}(j)$ remains fixed, as opposed to the non-recursive description, where $U(j)-L(j)$ decreases with $j$.
\end{remark}
\begin{lemma}$\tilde{s}(j)=s(j),~j \geq 1$ \end{lemma}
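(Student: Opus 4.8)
The plan is to reduce the equality of the two feedback rules to a single algebraic identity linking the recursive threshold $a(j)$ to the accumulated partial sum $Z(j-1)$. First I would conjecture and prove by induction on $j$ that
\begin{equation}
a(j) = 2^{j-1}\bigl(a - Z(j-1)\bigr), \quad j \geq 1.
\label{eqn:akey}
\end{equation}
The base case $j=1$ holds since $a(1)=a$ and $Z(0)=0$. For the inductive step, assuming \eqref{eqn:akey}, the recursion \eqref{eqn:threshold} gives $a(j+1) = 2a(j) - B(j) = 2^j(a - Z(j-1)) - B(j)$; combining this with $2^j Z(j) = 2^j Z(j-1) + B(j)$, which follows from $Z(j) = Z(j-1) + B(j)2^{-j}$, yields $a(j+1) = 2^j(a - Z(j))$, as required.

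Second, I would substitute \eqref{eqn:akey} into the thresholds $\tilde{L}(j) = 2a(j)-m$ and $\tilde{U}(j) = 2a(j)+(n-m)$ and rescale by $2^{-j}$. Using $2^{-(j-1)}a(j) = a - Z(j-1)$, the condition $B(j) \leq \tilde{L}(j)$ defining $\tilde{s}(j)=0$ becomes, after multiplying by $2^{-j}$,
$$Z(j-1) + B(j)2^{-j} \leq a - m2^{-j},$$
i.e. $Z(j) \leq L(j)$, which is exactly the condition defining $s(j)=0$. An identical computation shows that $B(j) \geq \tilde{U}(j)$ is equivalent to $Z(j) \geq U(j)$, the condition for $s(j)=1$.

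Finally, since $L(j) < U(j)$ for all $j$, because $m$ and $n-m$ cannot both vanish when $n\geq 2$, the three outcomes $0$, $1$, $\continue$ form a genuine partition of the possible values in both descriptions; matching the $0$-case and the $1$-case therefore forces the $\continue$-case to match as well. Hence $\tilde{s}(j) = s(j)$ for every $j\geq 1$.

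I do not expect a genuine obstacle here: the statement is essentially a bookkeeping verification, and the only point requiring care is getting the powers of two right in \eqref{eqn:akey}. The recursive description doubles the threshold each round and subtracts the newly revealed digit $B(j)$, which is precisely the $2^{j-1}$ scaling that converts the shrinking-interval viewpoint (fixed threshold $a$, width $U(j)-L(j)$ shrinking in $j$) into the shifting-threshold viewpoint (moving $a(j)$, fixed width) highlighted in the preceding remark.
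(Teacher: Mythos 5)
Your proof is correct and follows essentially the same route as the paper: both establish, by induction on $j$, an affine identity linking the recursive thresholds to the accumulated sum $Z(j-1)$ (your $a(j)=2^{j-1}(a-Z(j-1))$ is exactly the paper's $U(j)=2^{-j}\tilde{U}(j)+Z(j-1)$ rewritten via $\tilde{U}(j)=2a(j)+(n-m)$) and then observe that the stop/continue conditions transform into one another under the rescaling $Z(j)=Z(j-1)+B(j)2^{-j}$. Your packaging is marginally tidier in that the single identity for $a(j)$ handles both thresholds at once, where the paper proves the $\tilde{U}$ case and defers the $\tilde{L}$ case as similar, but the substance is identical.
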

\begin{proof}
Observe that $s(j)$ is determined by testing $Z(j)$ against thresholds $U(j),L(j)$. Similarly $\tilde{s}(j)$ is obtained by testing $B(j)$ against $\tilde{U}(j),\tilde{L}(j)$. Since $Z(j)=Z(j-1)+2^{-j}B(j)$, it suffices to show  that 
\begin{equation}
U(j)=2^{-j}\tilde{U}(j)+Z(j-1),~~j\geq 1
\label{eqn:suffices}
\end{equation}
with a similar statement for $L(j)$ and $\tilde{L}(j)$.
Eq.~\eqref{eqn:suffices} is clearly true for $j=1$ since $U(1)=a+(n-m)/2$ and $\tilde{U}(1)=2a+(n-m)$ and $Z(0)=0$. Assume it holds for $j$. Then
 \begin{eqnarray}
 U(j+1) & = & U(j)-(n-m)2^{-(j+1)}\nonumber \\
 & = & 2^{-j}\tilde{U}(j)+Z(j-1)-(n-m)2^{-(j+1)} \nonumber \\
 & = & 2^{-(j+1)}\tilde{U}(j+1)+Z(j-1)+B(j)2^{-j} \nonumber \\
 & = & 2^{-(j+1)}\tilde{U}(j+1)+Z(j).
 \end{eqnarray}
\end{proof}

For the analysis presented later we observe that 
\begin{lemma}
\begin{equation}
Pr[B(l)=k]=\left\{ \begin{array}{cc} {n \choose k+n-m}, & k=-(n-m),\ldots,m, \\
0, & \mbox{otherwise}.
\end{array} \right.
\label{eqn:weight}
\end{equation}
\end{lemma}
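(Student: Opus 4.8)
The plan is to reduce $B(l)$ to a single binomial count by a bit-flipping change of variables. First I would record the standard dyadic-digit fact: since each $X_i$ is uniform on $[0,1)$, the binary digits generated by \eqref{eqn:bitdynam} are independent Bernoulli$(1/2)$ random variables, so for the fixed position $l$ the bits $b(i,l)$, $i=1,\ldots,n$, are iid Bernoulli$(1/2)$. Fairness of each bit comes from the uniform marginal, and independence across $i$ comes from independence of the $X_i$. Note that only the marginal at a single $l$ is needed, so independence of digits across positions is irrelevant here.

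Next I would introduce flipped variables $c_i := b(i,l)$ for $i\in\cH^+$ and $c_i := 1-b(i,l)$ for $i\in\cH^-$. Because flipping a fair bit leaves it Bernoulli$(1/2)$, the $c_i$ are again iid Bernoulli$(1/2)$. Substituting into the definition of $B(l)$ gives $B(l) = \sum_{i\in\cH^+} c_i - \sum_{i\in\cH^-}(1-c_i) = \sum_{i=1}^n c_i - (n-m)$, so that $B(l)+(n-m)=\sum_{i=1}^n c_i$ is simply a sum of $n$ iid fair bits.

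The conclusion then follows by recognizing $\sum_{i=1}^n c_i$ as a Binomial$(n,1/2)$ count: $Pr[B(l)=k]=Pr[\sum_{i=1}^n c_i = k+(n-m)]=\binom{n}{k+n-m}2^{-n}$ whenever $k+(n-m)\in\{0,1,\ldots,n\}$, i.e. for $k\in\{-(n-m),\ldots,m\}$, and zero otherwise. The support endpoints are a useful sanity check: $k=-(n-m)$ forces every positive bit to $0$ and every negative bit to $1$ (giving $\binom{n}{0}$), while $k=m$ forces the reverse (giving $\binom{n}{n}$).

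There is no deep obstacle; the only points requiring care are the dyadic-digit fact (fairness and cross-$i$ independence of $b(i,l)$) and the small bookkeeping in the flip that converts a difference of two independent binomials into a single binomial. I would also flag that the natural computation produces the normalizing factor $2^{-n}$ multiplying $\binom{n}{k+n-m}$, which the stated form appears to suppress.
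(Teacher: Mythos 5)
Your proof is correct and takes essentially the same approach as the paper's: the paper likewise flips the bits in $\cH^-$ (counting the $1$'s in $\cH^+$ together with the $0$'s in $\cH^-$), so that $B(l)=k$ reduces to a Binomial$(n,1/2)$ count equal to $k+n-m$. Your flag about the suppressed normalizing factor is also warranted --- the lemma as printed omits the $2^{-n}$, and the paper's own example $\mathbf{Q}=\frac{1}{16}\bigl(\cdots\bigr)$ for $n=4$, $m=2$ confirms the intended probability is $\binom{n}{k+n-m}2^{-n}$.
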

\begin{proof}
To see that \eqref{eqn:weight} is true, let  $B(l)=w^+-w^-$ where $w^+$ is the number of 1's in positions $\cH^+$ and $w^-$ the number of $1$'s in the positions $\cH^-$. Then the number of $0$'s in the positions $\cH^-$ is  $(n-m)-w^-$ and  $B(l)=k$ if and only if $w^+ + (n-m)-w^-=k+(n-m)$.
\end{proof}

In order to understand how $R_{sum}$ scales with $n$ we now proceed to analyze $\bar{T}$.  We first present a computational approach  when $a$ is an integer, followed by an asymptotic analysis for general $a$.

\subsection{Exact  Analysis}
We first present the  analysis  for $n=2,~m=1$ where the only non-trivial case  is $a=0$.  Node 1 sends a bit to node 2, which responds with a signal to continue or stop (here and for this case only, the leader node is a sensor node; see footnote on previous page). When the signal is to stop, both nodes know the class. Thus the average rate is $2\bar{T}=4\mbox{ bits}$, which follows from the fact that $Pr[T>k]=2^{-k}$ and $\bar{T}=1+\sum_{k>0}Pr[T>k]$.  

A general analysis for   $n\geq 2$ is presented next. We work with the recursive description of the protocol. The basic idea is that if the initial threshold $a$ is an integer then so is $a(j),j\geq 1$ as can be seen from \eqref{eqn:threshold}. Together with the fact that the distribution of $\{X(i,j),~i=1,2,\ldots,n\}$ does not depend on iteration index $j$ means that  $a(j)$ can be regarded as a state variable and if it shown to be restricted to a finite set, then $\bar{T}(a)$, the mean stopping time associated with a threshold $a$, can be obtained through the state transition probability matrix of a finite state transition graph. Let $$\bar{\mathbf T}=(\bar{T}(-(n-m)+1),\ldots,\bar{T}(m-1))^t$$

\begin{theorem}
The vector of mean stopping times is a solution of
\begin{equation}
(\mathbf{I}-\mathbf{Q})\bar{\mathbf T}=(1,1,\ldots,1)^t.
\label{eqn:matrix}
\end{equation}
where $\mathbf Q$ is a $(n-1)\times (n-1)$ matrix, whose value in the $a$th row and $a'$th column is $Pr[B(l)=2a-a']$, $a,a' \in \{-(n-m)+1,\ldots,m-1\}$ ($\mathbf I$ is a commensurate identity matrix).
\end{theorem}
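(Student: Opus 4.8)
The plan is to recognize $\bar T(a)$ as the expected absorption time of a finite, time-homogeneous absorbing Markov chain whose state is the effective threshold $a(j)$ of the recursive description, and then to obtain \eqref{eqn:matrix} as the standard first-step analysis equation for such a chain. I would proceed in three stages: (i) pin down the state space and show the chain is confined to it while the protocol runs; (ii) verify the Markov property and compute the one-step transition probabilities, identifying them with the entries of $\mathbf Q$; and (iii) write the first-step recursion for $\bar T$ and assemble it into matrix form.

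For stage (i) I would use the recursion $a(j+1)=2a(j)-B(j)$ from \eqref{eqn:threshold}. Since the initial threshold $a=a(1)$ is an integer and each $B(j)$ is integer valued, every $a(j)$ is an integer. The protocol continues at round $j$ precisely when $\tilde L(j) < B(j) < \tilde U(j)$, i.e. $2a(j)-m < B(j) < 2a(j)+(n-m)$. Writing the successor state as $a'=2a(j)-B(j)$, this continuation condition is equivalent to $-(n-m) < a' < m$, that is, $a' \in \mathcal{S} := \{-(n-m)+1,\ldots,m-1\}$, a set of exactly $n-1$ integers. Hence, as long as the protocol has not stopped, the state stays inside $\mathcal{S}$, while every event $a' \notin \mathcal{S}$ corresponds exactly to the stopping signals $\tilde s=0$ or $\tilde s=1$. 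This both fixes the $(n-1)$-dimensional state space indexing $\bar{\mathbf T}$ and shows that a transition from $a$ to $a'\in\mathcal{S}$ occurs iff $B(j)=2a-a'$.

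The crux of the argument is stage (ii). Because each $X_i$ is uniform on $[0,1)$, its dyadic digits $b(i,1),b(i,2),\ldots$ produced by \eqref{eqn:bitdynam} are iid Bernoulli$(1/2)$ and independent across $i$; consequently the round sums $B(1),B(2),\ldots$ are iid, each with the distribution of \eqref{eqn:weight}. The event that the protocol has continued through rounds $1,\ldots,j-1$ --- and in particular the value $a(j)$ --- is a function of $B(1),\ldots,B(j-1)$ alone, hence independent of $B(j)$. Therefore, conditioned on $a(j)=a$, the increment $B(j)$ retains its unconditional law, so $\{a(j)\}$ is a time-homogeneous Markov chain and the probability of moving from $a$ to $a'\in\mathcal{S}$ is $\Pr[B(l)=2a-a']$, which is exactly the $(a,a')$ entry of $\mathbf Q$. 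I expect this independence-across-rounds step to be the main obstacle to state cleanly, since it is what upgrades the per-round description into a genuine Markov chain and makes $\mathbf Q$ the restriction of the transition kernel to the transient set $\mathcal{S}$.

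With the chain in hand, stage (iii) is routine first-step analysis. Conditioning on the first round and using the Markov property gives $\bar T(a)=1+\sum_{a'\in\mathcal{S}}\Pr[B(l)=2a-a']\,\bar T(a')$ for each $a\in\mathcal{S}$: one round is always spent, and with probability $\Pr[B(l)=2a-a']$ the chain continues to $a'$ and accrues an expected $\bar T(a')$ further rounds. Stacking these $n-1$ equations yields $\bar{\mathbf T}=\mathbf{1}+\mathbf Q\bar{\mathbf T}$, i.e. $(\mathbf I-\mathbf Q)\bar{\mathbf T}=(1,1,\ldots,1)^t$, which is \eqref{eqn:matrix}. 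I would close by noting that $\mathbf Q$ is substochastic with at least one row sum strictly below $1$ (the stopping probability is positive from every state), so the chain is absorbing, $\bar{\mathbf T}$ is finite, and $\mathbf I-\mathbf Q$ is invertible, making the solution unique.
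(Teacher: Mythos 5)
Your proof is correct and follows essentially the same route as the paper's: both treat the integer threshold $a(j)$ of the recursive description as the state of a time-homogeneous chain confined to $\{-(n-m)+1,\ldots,m-1\}$ while the protocol runs (via $a(j+1)=2a(j)-B(j)$ and the continuation condition), identify the transition probabilities with $\Pr[B(l)=2a-a']$, and stack the first-step recursion $\bar T(a)=1+\sum_{a'}Q(a,a')\bar T(a')$ into \eqref{eqn:matrix}. Your phrasing of the homogeneity step via iid Bernoulli$(1/2)$ dyadic digits is a slightly more explicit version of the paper's observation that the remainders $X(i,j)$ stay iid uniform across rounds, and your closing remark that $\mathbf I-\mathbf Q$ is invertible (so $\bar{\mathbf T}$ is finite and unique) is a correct addition the paper does not state.
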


\begin{proof}
We show that at each iteration in the recursive description of the protocol, the problem is to compute $f(\tilde{\bX})$, where $\tilde{\bX}$ has the same joint probability distribution as $\bX$, perhaps with a different threshold. Towards this end it suffices to show (i) that the probability distribution $P_j$ of $X(i,j),~i=1,2,\ldots,n$ does not depend on the iteration index $j$, and that (ii) the possible value that a  threshold can have at any iteration in the algorithm lies in a finite set. Eq.~\eqref{eqn:matrix} then follows by a probability transition calculation. To see that (i) holds, note that $P_j$ is iid with marginal distribution that is uniform on $[0,1)$. For the proof of (ii), note that by \eqref{eqn:threshold}, if $a$ is an integer, then so is $a(j),j\geq 1$. The remaining steps to show (ii)  are by induction. For $j=1$, $a(j)$ lies in the set $\{-(n-m)+1,-(n-m)+2,\ldots,m-1\}$. Assume it is true for $j$. From the recursion \eqref{eqn:threshold}, and the fact that the protocol did not stop at iteration $j$, $B(j)$ must satisfy $2a_{j} -m < B(j) < 2a_{j}+(n-m)$, from which the assertion follows directly. Thus if the protocol continues at iteration $j$, the probability that the threshold  $a(j+1)=a'$ given that $a(j)=a$ is $Q(a,a')=Pr[B(j)=2a-a']$, which does not depend on $j$. Thus 
\begin{equation}
\bar{T}(a)=1+\sum_{a'=-(n-m)+1}^{m-1}\bar{T}(a') Q(a,a'),~a\in\{-(n-m),\ldots,m\},
\end{equation}
and \eqref{eqn:matrix} follows.
\end{proof}
\begin{example}
For $n=4$, $m=2$
\begin{equation}
{\mathbf Q}=\frac{1}{16}\left(\begin{array}{ccc}
4 & 1 & 0 \\
4 & 6 & 4 \\
0 & 1 & 4
\end{array}\right).
\end{equation}
\end{example}



\subsection{An approximation for $\bar{T}$}
The event $\{T>k\}$ is equivalent to the event $\{L(l)<Z(l) <U(l),~l=1,2,\ldots,k\}$, where $L(l),U(l)$ are defined immediately after \eqref{eqn:continue}. Thus $Pr[T>k] \leq Pr[L(k) < Z(k) < U(k)]$.
We will assume that $m/n =\beta$, a constant and that $a/n=\alpha$. By the central limit theorem, the cumulative distribution function of $Z(j)/\sqrt{n}$ converges  to that of a Gaussian with mean $\sqrt{n}(\beta-1/2)(1-2^{-j})$ and variance $(1-4^{-j})/12$.
 With $\sigma=\sqrt{n(1-4^{-k})/12}$, $\gamma=(\alpha+(1/2-\beta))$  let
 $A(n,k,\gamma)=\frac{1}{2}\left[\erf\left(\frac{(\gamma+ 2^{-(k+1)})n}{\sqrt{2}\sigma}\right) -   
\erf\left(\frac{(\gamma- 2^{-(k+1)})n}{\sqrt{2}\sigma}\right)\right].$
Then by the central limit theorem
$P[L(k)<Z(k)<U(k)]=A(n,k,\gamma)+\epsilon(n,k,\gamma)$, where $\epsilon(n,k,\gamma)$ is an error term in the central limit approximation. Thus
\begin{eqnarray}
\lim_{n\to \infty}\bar{T} & \leq  &1+\lim_{n\to \infty} \sum_{k=1}^\infty P[L(k)<Z(k)<U(k)] \nonumber \\
& = & 1+\lim_{n\to\infty}\sum_{k=1}^\infty \left(A(n,k,\gamma)+\epsilon(n,k,\gamma)\right). \nonumber \\
\label{eqn:clappxsum}
\end{eqnarray}
The first term $\sum_{k=1}^\infty A(n,k,\gamma) \leq A(n,\gamma)$, where
\begin{eqnarray}
\lefteqn{A(n,\gamma)= }\nonumber \\
&\left\{
\begin{array}{cc}
\frac{1}{2}\log_2\left( \frac{6n}{\pi}+1\right)+\sqrt{\frac{3}{1+\pi/(6n)}}, & \gamma = 0, \\
1+\kappa + o(1),  & 0 < \gamma \leq 1/4, \\
1+ o(1), & 1/4 < \gamma \leq 1/2,
\end{array}
\right.
\label{eqn:GaussAppx}
\end{eqnarray}  
where $o(1)\to 0$ as $n\to \infty$, and $\kappa$ is the largest integer for which $\frac{(\gamma-2^{-(\kappa+1)})n}{\sqrt{2}\sigma} \leq  -\frac{\sqrt{\pi}}{2}$. Note that $\kappa$ is bounded in $n$.
Eq.~\eqref{eqn:GaussAppx} follows  by applying  the bound $\erf(x) \leq \min(2x/\sqrt{\pi},1),~x\geq 0$ for the upper branch, $\erf(x)-\erf(y) \leq (2(x-y)/\sqrt{\pi})e^{-y^2},~x>y>0$, for the lower two branches. For the middle branch, we additionally use the bound $\erf(x)-\erf(y) \leq 2$, when $x>\sqrt{\pi}/2$ and $y<-\sqrt{\pi}/2$.
\begin{theorem}
$ \lim_{n\to \infty} \sum_{k=1}^\infty \epsilon(n,k,\gamma)=0$.
\end{theorem}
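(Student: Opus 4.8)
The plan is to read $\epsilon(n,k,\gamma)=P[L(k)<Z(k)<U(k)]-A(n,k,\gamma)$ as the error of a central-limit approximation to a \emph{single} interval probability and to control it via the Berry--Esseen (Esseen) inequality. The reason this is not a one-line consequence of the CLT already invoked is that the per-term Berry--Esseen bound is $\Theta(1/\sqrt n)$ \emph{uniformly in $k$} and does not decay as $k\to\infty$, so summing it over all $k$ diverges. I would resolve this by splitting the series at a cutoff $K=K(n)=\lceil 2\log_2 n\rceil$, applying Berry--Esseen to the head $1\le k\le K$ and bounding the tail $k>K$ by elementary geometric estimates on $P[L(k)<Z(k)<U(k)]$ and on $A(n,k,\gamma)$ separately.

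For the head, write $Z(k)=\sum_{i\in\cH^+}Y_i-\sum_{i\in\cH^-}Y_i$ where $Y_i=\sum_{l=1}^{k} b(i,l)2^{-l}$ is the $k$-bit truncation of $X_i$. These $n$ signed summands are independent and bounded, with common variance $(1-4^{-k})/12\ge 1/16$ for $k\ge1$ and bounded third absolute central moment, so the Lyapunov ratio is $O(1/\sqrt n)$ uniformly in $k$. Esseen's theorem then gives $\sup_x|P[Z(k)\le x]-\Phi((x-\mu_k)/\sigma)|\le C/\sqrt n$ with $\mu_k=n(\beta-1/2)(1-2^{-k})$ and the stated $\sigma$. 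Since $A(n,k,\gamma)$ is exactly the Gaussian interval probability for this mean and variance (the erf arguments equal $(U(k)-\mu_k)/(\sqrt2\,\sigma)$ and $(L(k)-\mu_k)/(\sqrt2\,\sigma)$), and $P[L(k)<Z(k)<U(k)]$ is a difference of two such CDF values, continuity of $\Phi$ gives $|\epsilon(n,k,\gamma)|\le 2C/\sqrt n$. Summing over $k\le K$ yields $O(K/\sqrt n)=O(\log n/\sqrt n)\to0$.

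For the tail I bound the two pieces of $\epsilon$ crudely. Using the two-sided bound $X_s-m2^{-k}<Z(k)<X_s+(n-m)2^{-k}$ already established in the proof of Lemma~1, the event $\{L(k)<Z(k)<U(k)\}$ forces $|X_s-a|<n2^{-k}$; since $X_s$ is a sum of $n\ge2$ independent uniforms its density satisfies $\|f_{X_s}\|_\infty\le1$, whence $P[L(k)<Z(k)<U(k)]\le 2n2^{-k}$. Likewise the two erf arguments in $A(n,k,\gamma)$ differ by $n2^{-k}/(\sqrt2\,\sigma)=O(\sqrt n\,2^{-k})$, so the Lipschitz bound $\erf(x)-\erf(y)\le(2/\sqrt\pi)(x-y)$ gives $A(n,k,\gamma)=O(\sqrt n\,2^{-k})$. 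Both are geometric in $k$, and with $2^{-K}\le n^{-2}$ the tail sums are $\sum_{k>K}P=O(n^{-1})$ and $\sum_{k>K}A=O(n^{-3/2})$, both vanishing. Combining, $\bigl|\sum_{k\ge1}\epsilon(n,k,\gamma)\bigr|\le\sum_{k\le K}|\epsilon|+\sum_{k>K}P[L(k)<Z(k)<U(k)]+\sum_{k>K}A(n,k,\gamma)\to0$.

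The main obstacle is precisely this summation issue: the CLT controls each term but not the infinite series, and the whole argument hinges on the observation that once $k\gtrsim\log_2 n$ the admissible interval has width $n2^{-k}$ so small that both the true probability and its Gaussian surrogate decay geometrically, which is what makes the tail negligible while leaving the head short enough for a uniform Berry--Esseen bound to kill it. A minor technical point to check carefully is the uniform-in-$k$ variance lower bound and the $n$-uniform density bound $\|f_{X_s}\|_\infty\le1$, both of which I expect to be routine (the latter from $\|f*g\|_\infty\le\|f\|_\infty\|g\|_1$). I would also note the argument is uniform in the fixed constant $\gamma$, so no case split among the branches of $A(n,\gamma)$ is needed here.
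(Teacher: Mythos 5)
Your proposal is correct and takes essentially the same route as the paper's proof: a uniform Berry--Esseen bound of order $1/\sqrt{n}$ on the head $k \lesssim \log_2 n$ and a geometric bound of order $\sqrt{n}\,2^{-k}$ on the tail, which is precisely the paper's decomposition $\epsilon(n,k,\gamma)\le\min(C\sqrt{n}\,2^{-k},\,D/\sqrt{n})$ with crossover at $k^*\sim\log_2 n$ and total $O(\log n/\sqrt{n})$. The only cosmetic difference is that you justify the tail bound on the true interval probability via the sandwich $|X_s-a|<n2^{-k}$ and the density bound $\|f_{X_s}\|_\infty\le 1$, where the paper directly asserts the analogous CDF-difference bound $C_3\sqrt{n}\,2^{-k}$ for $Z(k)$.
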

\begin{proof}
Since the random variables $b(i,l)$ have finite absolute third  moment, it follows from a result due to Berry and Esseen~\cite{butzer1975rate}, that for $x\in \mathbb{R}$, 
$|F_{Z(k)/\sqrt{n}}(x)-F_G(x)|\leq C_1/\sqrt{n}$, where 
$F_G$ is the CDF of a Gaussian random variable with with mean $\sqrt{n}(\beta-1/2)(1-2^{-k})$ and variance $(1-4^{-k})/12$, $F_{Z(k)/\sqrt{n}}$ is the CDF of $Z(k)/\sqrt{n}$ and $C_1$ is a constant independent of $x$. Further, $|F_G(U(k)/\sqrt{n})-F_G(L(k)/\sqrt{n})| \leq C_2\sqrt{n}2^{-k}$ and  $|F_{Z(k)/\sqrt{n}}(U(k)/\sqrt{n})-F_{Z(k)/\sqrt{n}}(L(k)/\sqrt{n})| \leq C_3\sqrt{n}2^{-k}$ for constants $C_2$ and $C_3$.  Thus $\epsilon(n,k,\gamma) \leq \min (C \sqrt{n}2^{-k},D/\sqrt{n})$ for constants $C$ and $D$. We now proceed to bound  $\sum_{k=1}^\infty \epsilon(n,k,\gamma)$. Let $k^*$ be the largest such that $c\sqrt{N} 2^{-k} > D/\sqrt{n}$ for $k < k^*$. Then   $\sum_{k=1}^\infty \epsilon(n,k,\gamma) \leq k^* D/\sqrt{n} +2D/\sqrt{n} \leq (1+o(1))\log n/\sqrt{n} $.  
\end{proof}
\begin{remark}
It is interesting that  when $\gamma \neq 0$, $\bar{T}$ is integer valued.
\end{remark}

\section{One-Way Protocols}
\label{sec:SingleRound}
In the One-Way protocol, the class label is to be known at the leader node alone. In the One-Way$+$ protocol, the leader  informs each sensor node of the class label at an additional cost of $n$ bits.  The encoder for the $i$th node is a mapping $\phi~:~\mathbb{R}\rightarrow \mathcal{C}_i$, where $\mathcal{C}_i$ is the codebook for node $i$. Encoded value $\phi(X_i)$ is sent to a leader node using  $R_i=\log_2(|\mathcal{C}_i|)$ bits. The leader node estimates the class label, and informs each node about this. The total rate is given by $R_{sum}=\sum_{i=1}^n R_i +n$ bits. Suppose the estimated class label is $g(\bX)$. The error $P_e=Pr[f(\bX)\neq g(\bX)]$ is a function of the rate $R_{sum}$. We analyze and implement this protocol when $\phi(X_i)=\lfloor X_i/\delta \rfloor \delta$ for each $i$ ($\lfloor x\rfloor$ is the largest integer $\leq x$). In terms of the step size $\delta$, $R_{sum}=n(\log_2(1/\delta)+1)$ bits. We now determine $\delta$, and thus $R_{sum}$  in terms of $P_e$. Our main finding is that for $m=n-m$ and $a=0$, the sum rate for One-Way$+$ is
\begin{equation}
R_{sum} \geq n\left(\log_2\left(\frac{1}{P^*_e}\right)+\frac{1}{2}\log_2 \left( \frac{12n}{\pi}\right)+1\right),
\label{eqn:OW}
\end{equation}
where $P^*_e$ is a target upper bound on the error probability.
Thus for this case, for fixed $P_e$, $R_{sum}/n$ grows logarithmically in $n$.

The proof of  \eqref{eqn:OW} is now presented. Observe that the protocol partitions the unit cube $[0,1)^n$ into smaller cubes, called cells,  of side $\delta$. An error occurs only when the source vector lies in a cell whose interior does not intersect the separating hyperplane $\partial f :=\{\sum_{i\in \cH^+}X_i-\sum_{i\in\cH^-}X_i=a\}$. We now calculate the number of cells that intersect $\partial f$. In order to simplify the analysis a bit, let $Y_i=-X_i+1$ for $i\in \cH^-$. Note that, like $X_i$, $Y_i$ is uniformly distributed on $[0,1)$. Now rename the $Y_i$ to $X_i$. The boundary $\partial f :=\{\sum_i X_i=a+(n-m)\}$. Thus under the transformation, the threshold is now $a+(n-m)$, which we rename to $a$, with the constraint that the new $a$ satisfies $0< a < n$. Under our assumption of uniform quantization, the lower endpoints of the bins are in the set ${\mathcal L}:=\{m\delta$, $m=0,1,\ldots,M-1\}$, where $M=1/\delta$ is the number of bins. Let $L_i=l_i\delta\in\mathcal L$ be the lower endpoint of the bin for $X_i$, i.e $l_i \leq X_i < l_i+\delta$. Then a cell intersects $\partial f$ if and only if $a- n\delta \leq \sum_i L_i < a$, or equivalently
\begin{equation}
-\frac{1}{2} < \frac{1}{n}\sum_i l_i -\left(\frac{M-1}{2}\right) < \frac{1}{2},
\end{equation}
as can be seen after some algebraic manipulation. The random variable $Z:=\frac{1}{n}\sum_i l_i -\left(\frac{M-1}{2}\right)$ has mean zero and variance $(M^2-1)/(12n)$ and through an application of the central limit theorem it follows that $\lim{n\to\infty}P_e=\lim_{n\to\infty}(1/2)\erf(\sqrt{12n/(M^2-1)})$. The bound $\erf(x) \leq \min(2x/\sqrt{\pi},1),~x\geq 0$, and some further algebra leads to \eqref{eqn:OW}.

We conjecture that the performance of a uniform quantizer cannot be improved on, under the assumptions made here. The proof of this appears to be significantly complicated. 

\section{Lower Bounds for the Interactive Protocol}
\label{sec:converse}

We first show, through a lower bounding argument, that for $n=2, m=1$ and  $a=0$, the bit exchange protocol achieves the optimal sum rate. For $n>2$, the protocol achieves optimal scaling with respect to $n$, when $\gamma\neq 0$. However, when $\gamma \neq 0$, there is a $\log n$ gap in the sum rate.  We show this through a lower bound for $n>2$.

\subsection{A Tight Lower Bound for  $n=2,m=1,a=0$}


The Interactive Protocol results in a rectangular partition $(\cP,\cQ)$ where  $\cP$ is a subpartition of the region $\cR_p=\{\bx~:~f(\bx)=1\}$ and $\cQ$ of $\cR_p=\{\bx~:~f(\bx)=0\}$. Since the error probability is zero, we refer to $(\cP,\cQ)$ as a \emph{zero-error} partition.  The boundary is represented by the set $\cB:=\{x_1=x_2\}\bigcap (0,1)^2$.  For a zero-error partition $(\cP,\cQ)$ it is true that each point of $\cB$  must be the upper left corner of some rectangle that lies entirely in $\cR_p$ or the lower right corner of some rectangle that lies entirely in $\cR_q$, except possibly for a set of one-dimensional measure zero. Let $\{p_i,~i=1,2,\ldots\}$ be the probabilities of  the   cells of $\cP$   and let $\{q_i,~i=1,2,\ldots\}$ be the  probabilities of the  cells of $\cQ$. We note here that we use the word rectangle to include what is referred to as a \emph{combinatorial rectangle}, which is the Cartesian product of unions of intervals~\cite{KushNis:1997}.

\begin{figure}[htbp] 
   \centering
   \begin{tabular}{cc}
   \includegraphics[width=2.6cm]{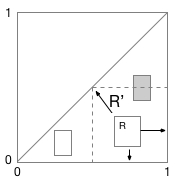} &
   		\includegraphics[height=2.6cm]{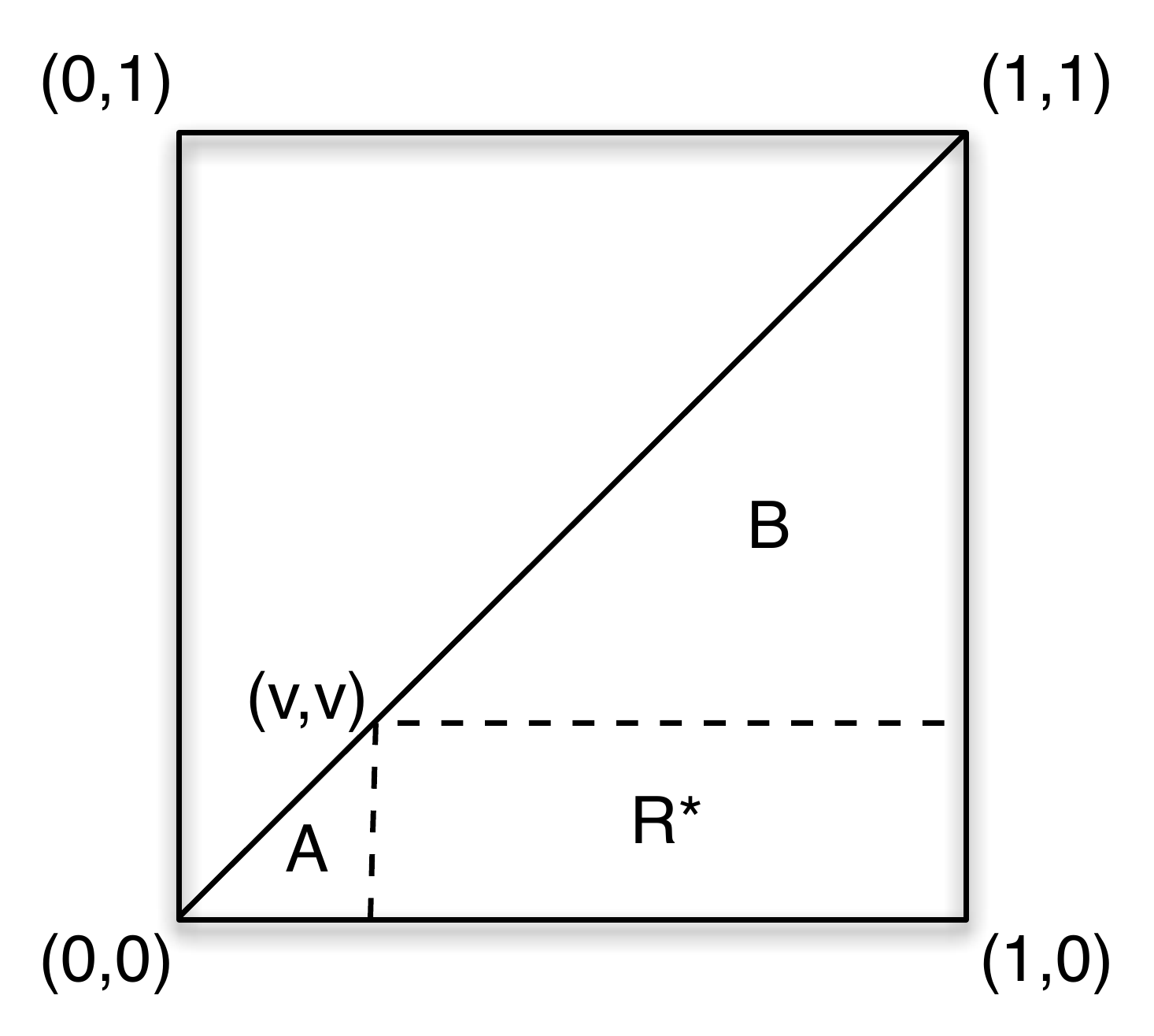}  
		\end{tabular} 

   \caption{(left) Illustration of the readjustment of the rectangle $R$ with the highest probability, (right) Regions used in Thm.~\ref{thm:fourbits}. $R_p=A\bigcup R^* \bigcup B$.}
   \label{fig:Readjust}
\end{figure}

\begin{theorem}
\label{thm:pathconnected}
If a partition minimizes the entropy it contains a rectangle with vertices $(1,0)$ and   $(v,v)$ and  another rectangle with vertices $(0,1)$ and  $(u,u)$, for some $0<u,v <1$. 
\end{theorem}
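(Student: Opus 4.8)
The plan is to treat the quantity being minimized as the partition entropy $H(\cP,\cQ)=-\sum_i p_i\log_2 p_i-\sum_j q_j\log_2 q_j$ and to argue by a \emph{readjustment} (exchange) argument resting on the strict concavity of $t\mapsto -t\log_2 t$: transferring probability mass from a cell into one of at least equal probability never increases $H$, and strictly decreases it whenever a genuine nonzero transfer occurs. Indeed $\frac{d}{d\epsilon}\big[-(p+\epsilon)\log_2(p+\epsilon)-(q-\epsilon)\log_2(q-\epsilon)\big]\big|_{\epsilon=0}=\log_2(q/p)$. I would reduce the theorem to the single assertion that, at a minimizer, the highest-probability cell lying in $\cR_p=\{x_1\ge x_2\}$ must be the corner rectangle $R^\ast=[v,1]\times[0,v]$ with vertices $(1,0)$ and $(v,v)$, together with the mirror-image assertion for $\cR_q=\{x_1<x_2\}$ obtained from the symmetry $(x_1,x_2)\mapsto(x_2,x_1)$, which interchanges $\cR_p\leftrightarrow\cR_q$ and the two claimed rectangles. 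Since these two readjustments act on disjoint regions, they do not interfere.

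The core construction is the readjustment sketched in Fig.~\ref{fig:Readjust}. Let $R$ be a cell of $\cP$ of largest probability. I would grow $R$ until two of its edges lie on $\{x_1=1\}$ and $\{x_2=0\}$ and its upper-left corner reaches $\cB$ at $(v,v)$, arriving at $R^\ast=[v,1]\times[0,v]$; every intermediate rectangle stays inside $\cR_p$ because a point with $x_1\ge v\ge x_2$ satisfies $x_1\ge x_2$. Each growth absorbs area from neighboring cells of $\cP$; since $R$ is a largest cell and only grows, every absorbed sliver is taken from a cell of probability at most that of $R$, so each elementary transfer does not increase $H$ and strictly decreases it once a genuine transfer has occurred. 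The complement $\cR_p\setminus R^\ast$ splits as $A\cup B$, where $A=\{v< x_2\le x_1\le 1\}$ is the upper sub-triangle and $B=\{0\le x_2\le x_1< v\}$ the lower-left one; both are scaled copies of the original lower-triangle problem, so the residual cells can be repackaged into valid partitions of $A$ and $B$. Because $(v,v)$ is covered by the upper-left corner of $R^\ast$, and the diagonal pieces inside $A$ and $B$ are covered by the inherited subpartitions, the zero-error boundary-covering property stated earlier in Sec.~\ref{sec:converse} is preserved.

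With the readjustment in hand the theorem follows by minimality: the construction never increases $H$, and it strictly decreases $H$ unless $R$ is already the corner rectangle (so that no genuine transfer takes place). Hence at an entropy minimizer the largest cell of $\cP$ is forced to equal $R^\ast$, a rectangle with vertices $(1,0)$ and $(v,v)$; applying the reflected argument to $\cQ$ produces the second rectangle, with vertices $(0,1)$ and $(u,u)$, completing the proof.

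The hard part will be the geometric bookkeeping in the readjustment. Truncating a neighbor of $R$ can leave an L-shaped remnant, and restoring the rectangular structure by subdivision generically \emph{increases} entropy, so I must show that the mass gained by $R$ strictly dominates any entropy created in re-tiling the remnants. I would control this by absorbing one monotone strip at a time and invoking the \emph{combinatorial rectangle} convention of~\cite{KushNis:1997}, so that each remnant survives as a single cell rather than being split. A secondary obstacle is that the partition has countably infinitely many cells accumulating on $\cB$, so before the strict-decrease step can be applied I must argue that the infimum of $H$ is attained, or else phrase the conclusion as a necessary condition satisfied by the limit of any minimizing sequence.
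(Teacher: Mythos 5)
Your proposal takes essentially the same route as the paper: the paper's own proof is exactly the sketch you describe---grow the largest rectangle in a partition cell until a vertex touches the diagonal boundary while draining probability from the neighboring cells, so that the readjusted cell probabilities majorize the original ones and entropy (being Schur-concave) cannot increase, with strictness forcing the corner rectangles $[v,1]\times[0,v]$ and $[0,u]\times[u,1]$ at a minimizer---and your elementary derivative computation $\log_2(q/p)$ is just the standard transfer (T-transform) proof of that majorization step. The remaining bookkeeping you flag (re-tiling remnants, attainment of the infimum) is likewise omitted by the paper, which defers all details to~\cite{VV:2017}, so your blind attempt matches the published argument at the same level of rigor.
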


\begin{proof} (Sketch)
The idea is to grow the largest rectangle in a partition cell until a vertex touches the boundary while reducing the probabilities of the other rectangles in the partition cell. The probability distribution of the readjusted rectangles majorizes all other rectangular partitions. The details of the proof are omitted due to space constraints and can be found in~\cite{VV:2017}.
\end{proof}
\begin{theorem}
For $n=2,m=1,a=0$, the sum rate is no smaller than $4$ bits.
\label{thm:fourbits}
\end{theorem}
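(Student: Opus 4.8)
The plan is to reduce the rate lower bound to an entropy lower bound and then exploit the self-similar structure guaranteed by Theorem~\ref{thm:pathconnected}. First I would observe that any zero-error protocol, run to completion, produces a transcript (the concatenation of all bits exchanged, in order) whose length equals the number of bits charged to that session and whose expectation is $R_{sum}$. Since the protocol halts as a function of the transcript alone, the set of complete transcripts is prefix-free and is in bijection with the cells of the induced rectangular partition $(\cP,\cQ)$. Kraft's inequality therefore gives $R_{sum}\ge H(\cP,\cQ)$, the entropy of the cell distribution $\{p_i\}\cup\{q_i\}$. Hence it suffices to show that the minimum entropy $H^\ast$ over all zero-error partitions is at least $4$ bits; this matches the entropy of the binary-expansion partition of Sec.~\ref{sec:bitexchange} and so is tight.

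Next I would pin down the shape of a minimizing partition. By Theorem~\ref{thm:pathconnected} such a partition contains a $\cP$-rectangle $R^\ast$ with opposite corners $(1,0)$ and $(v,v)$ and a $\cQ$-rectangle with opposite corners $(0,1)$ and $(u,u)$. Because the whole problem is invariant under the reflection $(x_1,x_2)\mapsto(x_2,x_1)$, which swaps $\cR_p\leftrightarrow\cR_q$, I may take the minimizer symmetric and set $u=v$. Removing these two rectangles from the unit square leaves exactly the two sub-squares $[0,v]^2$ and $[v,1]^2$; correspondingly $\cR_p=A\cup R^\ast\cup B$ as in Fig.~\ref{fig:Readjust}, with $A=\cR_p\cap[0,v]^2$ and $B=\cR_p\cap[v,1]^2$. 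Each sub-square carries a zero-error partition of the \emph{same} classification problem, rescaled by $v$ and by $1-v$ respectively, so by scale invariance of the cell entropy each has conditional entropy at least $H^\ast$.

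Now I would apply the entropy chain rule to the four-way coarse split into $R^\ast$, its $\cQ$-mate, $[0,v]^2$, and $[v,1]^2$, which occur with probabilities $v(1-v)$, $v(1-v)$, $v^2$, and $(1-v)^2$. A short computation collapses the coarse entropy to $2H_b(v)$, where $H_b(v)=-v\log_2 v-(1-v)\log_2(1-v)$, so that
\[
H^\ast \;=\; 2H_b(v)+v^2H_1+(1-v)^2H_2 \;\ge\; 2H_b(v)+\bigl(v^2+(1-v)^2\bigr)H^\ast ,
\]
where $H_1,H_2\ge H^\ast$. Rearranging, and using $1-v^2-(1-v)^2=2v(1-v)$, yields $H^\ast\ge H_b(v)/\bigl(v(1-v)\bigr)$. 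Finally I would minimize the right-hand side: the binary-entropy inequality $H_b(v)\ge 4v(1-v)$ holds on $(0,1)$ with equality exactly at $v=1/2$ (indeed $H_b(v)/(v(1-v))\to\infty$ as $v\to 0,1$ and $v=1/2$ is its only interior critical point), so $H^\ast\ge 4$ and therefore $R_{sum}\ge H^\ast\ge 4$ bits.

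I expect the main obstacle to be the rigorous justification of the recursion: establishing that, after excising the two extremal rectangles, the residual partitions really are independent instances of the rescaled problem to which both Theorem~\ref{thm:pathconnected} and the chain-rule decomposition apply, and that the symmetry reduction to $u=v$ is lossless. The entropy inequality $H_b(v)\ge 4v(1-v)$ is then a routine but essential analytic lemma, and the transcript/Kraft step, though standard, must be checked against this paper's bit-counting convention so that the reduction $R_{sum}\ge H(\cP,\cQ)$ is exact.
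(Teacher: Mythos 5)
Your derivation is, up to bookkeeping, the paper's own proof: your bound $H^\ast \ge H_b(v)/\bigl(v(1-v)\bigr)$ is algebraically identical to the paper's decomposition \eqref{eqn:firstdecomp} together with the recursion \eqref{eqn:entropyratio}, because $2H_b(v) = H([v^2,2v(1-v),(1-v)^2]) + 2v(1-v)$ (in bits), so your single four-way split is exactly the paper's ``peel off $H(C)=1$, then recurse inside $\cR_p$'' rearranged; at $v=1/2$ both give $1+3=4$. The one genuine gap is the reduction to $u=v$. Reflection symmetry of the \emph{problem} does not give you a symmetric \emph{minimizer}: partitions cannot be averaged, the infimum defining $H^\ast$ is not known to be attained, and a lower bound must cover every zero-error partition, including asymmetric ones. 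Moreover your argument truly uses $u=v$: only then does excising the two extremal rectangles $[v,1]\times[0,v]$ and $[0,u]\times[u,1]$ leave exactly the two sub-squares $[0,v]^2$ and $[v,1]^2$. If $u\neq v$ the complement is not a union of two squares, your four-way coarse split is no longer a coarsening of the partition, and the recursion does not close --- so as written the proof simply does not apply to partitions with $u\neq v$. You flagged this step yourself as needing justification; it is the one step that actually fails rather than merely needing polish.

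The repair is precisely what the paper does: condition on the class variable $C$ \emph{first}. Given $C=1$, the $\cP$-cells split three ways into $R^\ast$, $A=\cR_p\cap[0,v]^2$ and $B=\cR_p\cap[v,1]^2$ --- a legitimate coarsening, since any $\cP$-cell other than $R^\ast$ is a (combinatorial) rectangle in $\cR_p$ disjoint from $R^\ast$, and if it met both sub-squares it would contain a point $(x_1,x_2)$ with $x_1<v<x_2$, i.e.\ a point of $\cR_q$, contradicting zero error. This yields a recursion in $v$ alone with conditional entropy at least $3$ bits (minimum at $v=1/2$); the $C=0$ side gives the same in $u$ alone; and $H(C)=1$ since the diagonal bisects the mass, whence $H(\cP,\cQ)\ge 4$ with no symmetry assumption. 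Two minor notes: your transcript/Kraft step $R_{sum}\ge H(\cP,\cQ)$ is a sound (and welcome) explicit justification of what the paper inherits from \cite{OrEl:1990}; and strictly speaking both you and the paper should run the recursion with an $\epsilon$-near-minimizer since attainment of the infimum is not established --- your inequality form $H_1,H_2\ge H^\ast$ is actually slightly more careful here than the paper's equality recursion, and the blow-up of $H_b(v)/\bigl(v(1-v)\bigr)$ as $v\to 0,1$ keeps that limiting argument safe.
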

\begin{proof}
Consider an extreme partition $(\cP,\cQ)$ which contains a rectangle $R^*$ which has a points $(v,v)$ and $(1,0)$ as its upper left and lower right vertices.  This is always true by Thm.~\ref{thm:pathconnected}. Let random variable $C$ indicate whether $(x_1,x_2)$ lies in $\cR_p$ or not, and let random variable $S$ indicate whether $(x_1,x_2)$ lies in one of the three regions, $R^*$, $A$ or $B$ as shown in Fig.~\ref{fig:Readjust}. Let $H(\cP,\cQ))$ denote the entropy of the partition $(\cP,\cQ)$. Then 
\begin{eqnarray}
H(\cP,\cQ) & = & H(C)+H(\cP,\cQ|C=0)P(C=0)+ \nonumber \\
& & ~~H(\cP,\cQ|C=1)P(C=1) 
\label{eqn:firstdecomp}
\end{eqnarray}
and
\begin{eqnarray}
\lefteqn{H(\cP,\cQ|C=1)= H(S|C=1)+} & \nonumber \\ 
& +H(\cP,\cQ|C=1,S=A)P(S=A|C=1)+ \nonumber \\
& + H(\cP,\cQ|C=1,S=B)P(S=B|C=1). 
\end{eqnarray}
Since the regions $A$ and $B$ are similar to $\cR_p$ it follows that if this partition minimizes the entropy it must satisfy the recursion
\begin{eqnarray}
\lefteqn{H(\cP,\cQ|C=1)  =  H([v^2,2v(1-v),(1-v)^2])+} & & \nonumber \\
&  & ~~~~~+ (v^2+(1-v)^2)H(\cP,\cQ|C=1).
\end{eqnarray}
Solving for $H(\cP,\cQ|C=1)$ we obtain
\begin{equation}
H(\cP,\cQ|C=1)=\frac{H([v^2,2v(1-v),(1-v)^2])}{2v(1-v)}
\label{eqn:entropyratio}
\end{equation}
whose unique minimum value of $3$ bits occurs when $u=v=1/2$. Plugging back in (\ref{eqn:firstdecomp}) leads to the desired result.
\end{proof}


\subsection{$n>2$}
\label{sec:multiparty}

\begin{lemma} (Lemma 9 in \cite{OrEl:1990})
Assume that probabilities $p_i,~i=1,2,\ldots$ are in decreasing order, $\sum_{i=1}^\infty p_i =1$ and that $\sum_{i=m}^\infty p_i \geq g(m)$, $m=1,2,\ldots$. Then the entropy $H(p) \geq \sum_{i=1}^{m-1}p_i\log (1/p_i) + g(m) \log (m/(1-g(m+1)))$.
\end{lemma}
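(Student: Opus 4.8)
The plan is to split the entropy series at index $m$, retain the first $m-1$ terms verbatim (they appear unchanged on the right-hand side), and derive the remaining bound entirely from the tail $\sum_{i=m}^\infty p_i\log(1/p_i)$. Writing $T:=\sum_{i=m}^\infty p_i$, the hypothesis gives $T\geq g(m)$ directly, so the only real work is to show that every tail term carries a logarithmic factor at least $\log\bigl(m/(1-g(m+1))\bigr)$; once that is established the factor pulls out of the sum and $T$ can be replaced by its lower bound $g(m)$.

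First I would bound the $m$th probability. Applying the tail hypothesis at index $m+1$ gives $\sum_{i=1}^{m}p_i = 1-\sum_{i=m+1}^\infty p_i \leq 1-g(m+1)$. Since $p_1\geq p_2\geq\cdots\geq p_m$, the smallest of these $m$ terms satisfies $m\,p_m\leq\sum_{i=1}^m p_i$, whence $p_m\leq (1-g(m+1))/m$. This is the crux of the argument: it combines the two tail hypotheses (at $m$ and at $m+1$) into a pointwise ceiling on the decreasing sequence.

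By monotonicity, $p_i\leq p_m\leq(1-g(m+1))/m$ for every $i\geq m$, so $\log(1/p_i)\geq\log\bigl(m/(1-g(m+1))\bigr)$ for each such $i$. Factoring this constant out of the tail sum gives $\sum_{i=m}^\infty p_i\log(1/p_i)\geq T\log\bigl(m/(1-g(m+1))\bigr)$, and since the logarithmic factor is nonnegative (as $1-g(m+1)\leq 1\leq m$) I may replace $T$ by the smaller quantity $g(m)$. Adding back the untouched head $\sum_{i=1}^{m-1}p_i\log(1/p_i)$ reproduces the stated inequality exactly.

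The one point that needs care is precisely the sign of the logarithmic factor at the substitution step: replacing $T$ by $g(m)$ is legitimate only because $m/(1-g(m+1))\geq 1$, which holds whenever $g(m+1)\geq 0$ — automatic here since $g$ lower-bounds a sum of probabilities. I would flag this nonnegativity at the outset. Beyond tracking that sign condition I expect no genuine obstacle, since the remainder is the elementary rearrangement above.
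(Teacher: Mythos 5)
Your proof is correct: the pointwise bound $p_m\leq(1-g(m+1))/m$, obtained from monotonicity together with the tail hypothesis at $m+1$, is exactly the crux, and your handling of the sign condition $g(m+1)\geq 0$ (needed so the factor $\log\bigl(m/(1-g(m+1))\bigr)$ is nonnegative when replacing $T$ by $g(m)$) is the right point to flag, since the statement leaves it implicit. Note that the paper itself supplies no proof --- it imports the result as Lemma~9 of \cite{OrEl:1990} --- and your argument is essentially the standard one from that reference, so there is nothing to reconcile.
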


It turns out to be  easy to determine the largest rectangle in $f^{-1}(0)$ for $n>2$. We do so here for the case 
 where $a=0$ and $m=n/2$. In this case, with $p_1$ being the probability of the largest rectangle in $f^{-1}(0)$ it follows that  $p_1 \leq  2^{-n}$ as can be checked by maximizing $\prod_{i\in \cH^+ }(1-x_i)\prod_{i \in \cH^-} x_i$ subject to $\sum_{i\in \cH^+}x_i -\sum_{i\in \cH^-}x_i =0$. Thus it follows that $H(p) \geq n$. In this specific case, there is a $\log n$ gap between the upper bound of our protocol and the lower bound. For the cases where $\gamma >0$, the rate grows linearly in $n$. Since we cannot expect a slower growth with $n$ (the result must be distributed to $n$ nodes), the protocol exhibits the correct scaling behavior with $n$.  
\section{Numerical Experiments}
\label{sec:numeric}
\begin{figure}[htbp] 
	\centering
	\begin{tabular}{cc}
	\includegraphics[width=1.7in]{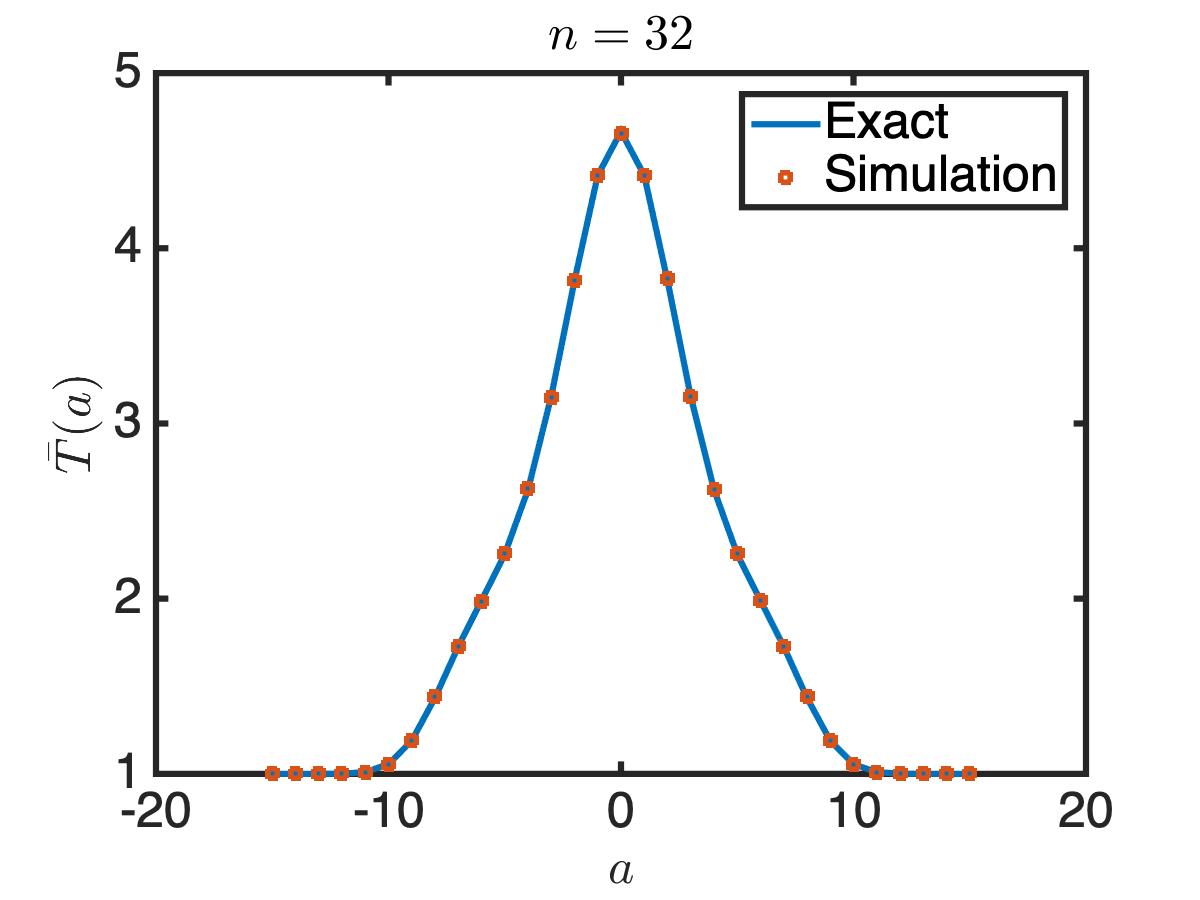} 	& \includegraphics[width=1.7in]{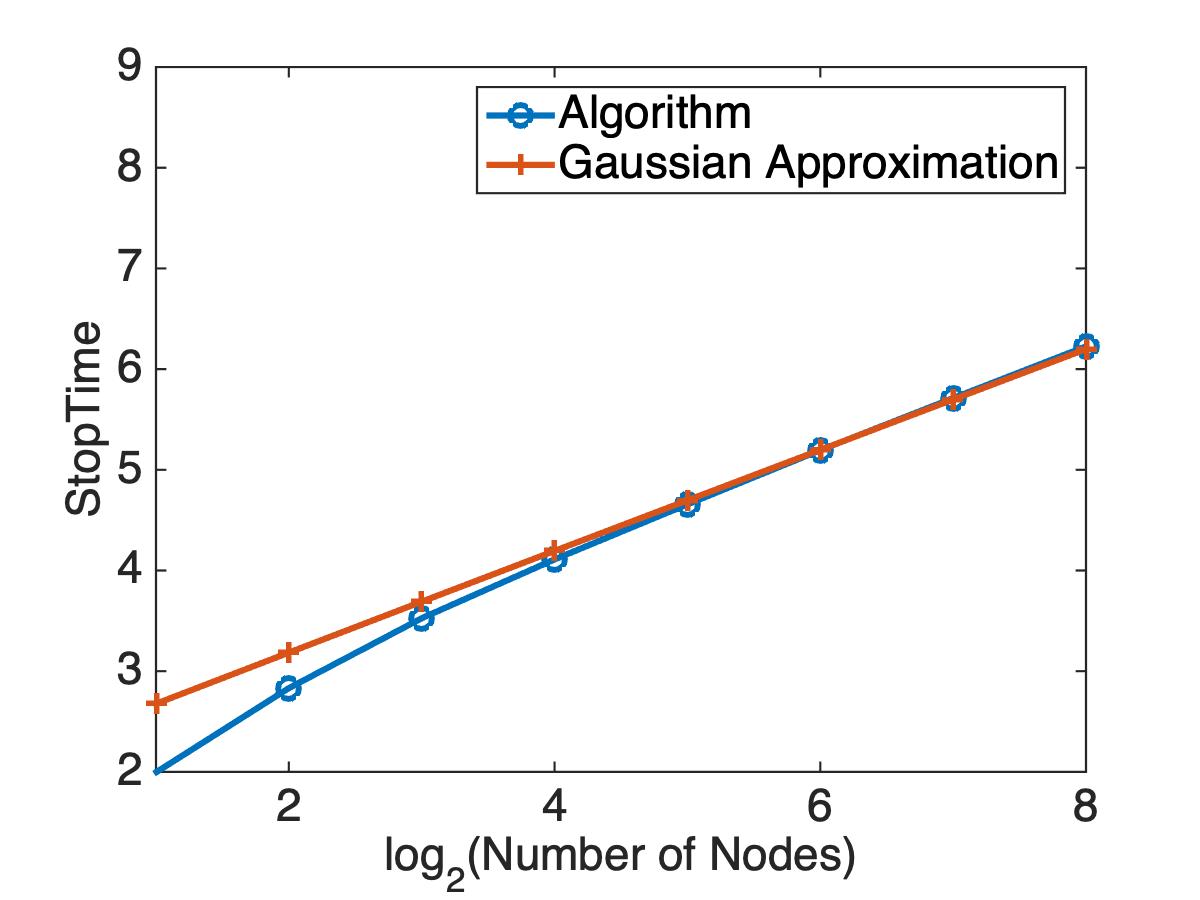}\\  \includegraphics[width=1.7in]{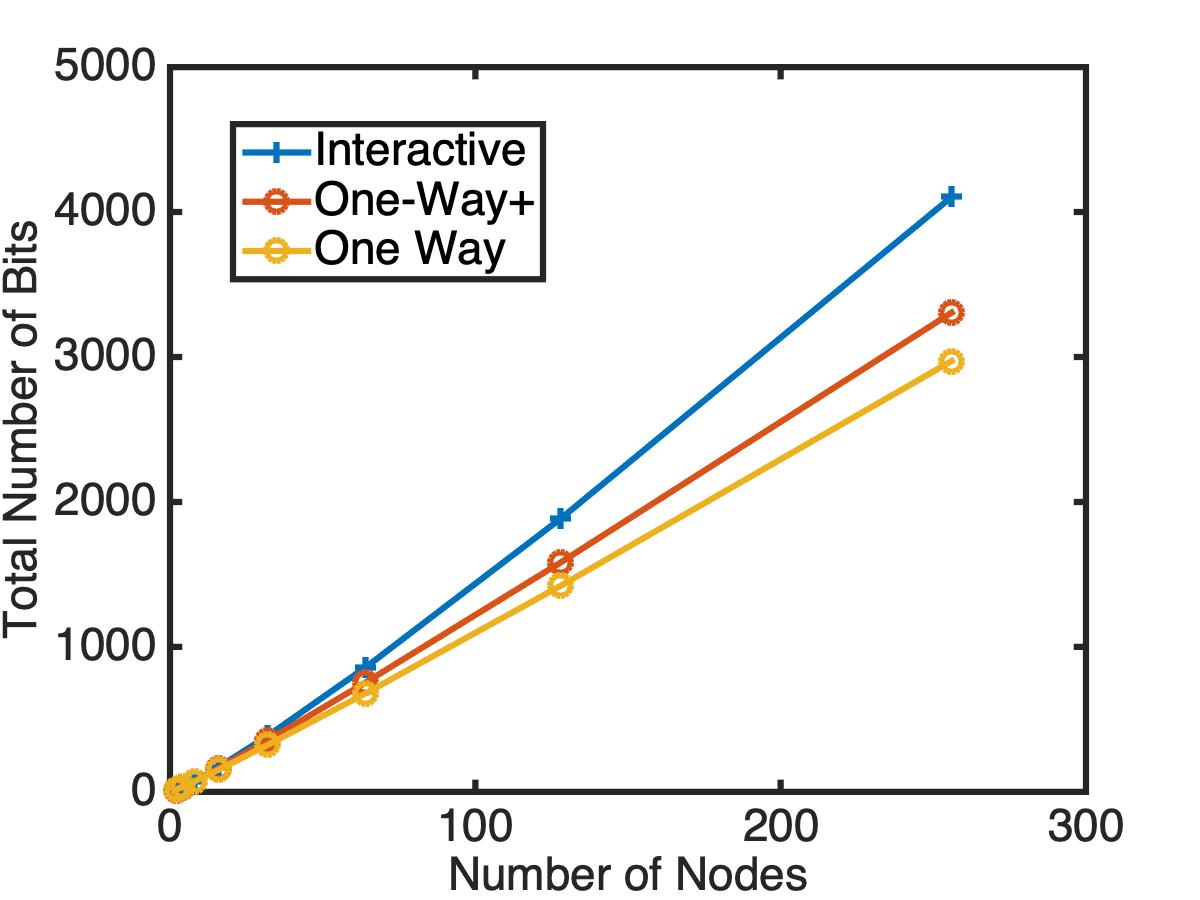}&   \includegraphics[width=1.7in]{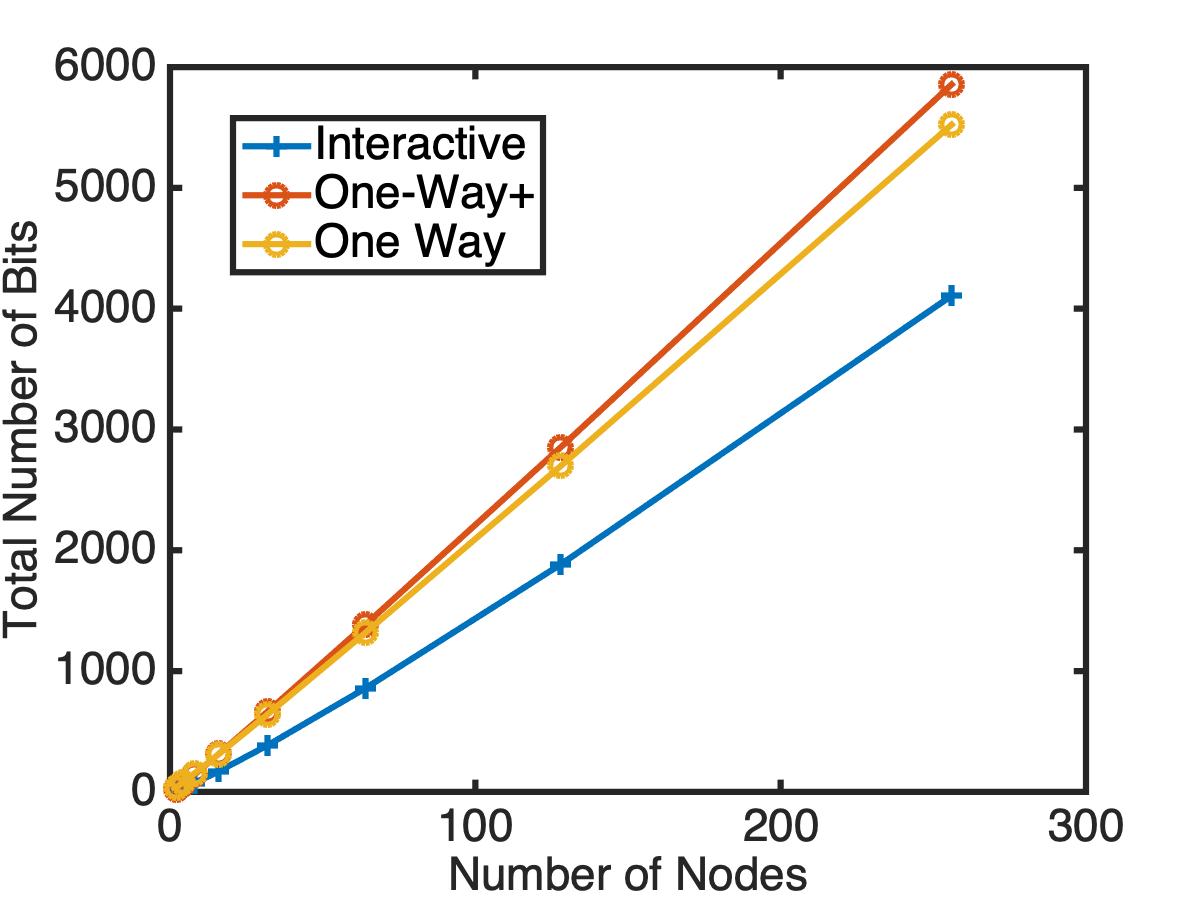}
	\end{tabular}
	\caption{From top left to bottom right: (a) Comparison of exact \eqref{eqn:matrix} and simulation results for $n=32$  (b) stopping time of the Interactive Protocol as a function of $n$: simulation of actual protocol and Gaussian approximation \eqref{eqn:GaussAppx}, $a=0$ and $m=n/2$,(c)-(d) sum rate of the One-Way, One-Way$+$ and Interactive protocol for  $P_e=10^{-2}$ and  $P_e=10^{-5}$, resp. }
	\label{fig:StopTime}
\end{figure}

Computer simulations were carried out for the Interactive Protocol for a $n=32$ node network and compared to \eqref{eqn:matrix}. Simulations were based on $20,000$ repetitions per data point in the graph shown in Fig.~\ref{fig:StopTime}(a).  Fig.~\ref{fig:StopTime}(b) shows the agreement between the growth of the stopping time obtained via simulation and with the approximation \eqref{eqn:GaussAppx} for $m=n/2$, $a=0$. The agreement is good in all cases. 

Fig.~\ref{fig:StopTime}(d) compares the sum  rate of the Interactive Protocol with the One-Way protocols for $m/n=1/2$ and $a=0$.
In our experiments, we used a uniform quantizer for quantizing $X_i$ for each $i$. Note that in the one-way case, it is not possible to obtain zero error with finite rate. In our experiments we set the error probability to $10^{-5}$.
It is seen that for a $n=256$ node network, the Interactive Protocol used $4106$ bits whereas the One-Way protocol uses $6374$ bits; a non-trivial saving of almost $35$\% over the One-Way protocol for $P_e=10^{-5}$. On the other hand, as seen in Fig.~\ref{fig:StopTime}(c), when $P_e=10^{-2}$, the One-Way protocol uses only 2972 bits and thus the Interactive Protocol uses 38\% more bits. In both cases, the Interactive Protocol achieves this with zero error, at the cost of variable completion time. 
\section{Summary and Conclusions}
\label{sec:summary}
We presented and analyzed interactive and one-way protocols for solving a 2-classifier problem in a sensor network with $n$ sensor nodes. The sum rate of the interactive protocol depends on the mean stopping time of the protocol. Analysis  of both protocols is presented. The analysis reveals a $\log n$ gap between the interactive protocol and the lower bound. The relative performance of the interactive and one-way protocols is seen to depend strongly on the error probability required of the one-way protocol.
The cause of the gap between the interactive protocol and its lower bound requires further investigation. 

\end{document}